\newtheorem{theorem}{Theorem}
\title{A Fair and Scalable Power Control Scheme in Multi-cell Massive MIMO}
\name{ Amin Ghazanfari$^{\dagger}$, Hei Victor Cheng$^{\star}$, Emil Bj{\"o}rnson$^{\dagger}$, and Erik G. Larsson$^{\dagger}$\thanks{This paper was supported by ELLIIT and by the European Union’s Horizon 2020 research and innovation programme under grant agreement No 641985 (5Gwireless).}}
\address{$^{\dagger}$ Department of Electrical Engineering (ISY), Link{\"o}ping University, Sweden \\
	$^{\star}$ Department of Electrical and Computer Engineering, University of Toronto\\Email:\{amin.ghazanfari, emil.bjornson, erik.g.larsson,\}@liu.se, hvc@ieee.org }
\newcommand{\be}{\begin{equation}}
\newcommand{\ee}{\end{equation}}
\newcommand{\bal}{\begin{aligned}}
\newcommand{\eal}{\end{aligned}}	
\newcommand{\ba}{\begin{array}}
\newcommand{\ea}{\end{array}}
\newcommand{\bea}{\begin{eqnarray}}
\newcommand{\eea}{\end{eqnarray}}
\newcommand{\vbar}{\raisebox{.17ex}{\rule{.04em}{1.35ex}}}
\newcommand{\vbarind}{\raisebox{.01ex}{\rule{.04em}{1.1ex}}}
\newcommand{\R}{\ifmmode {\rm I}\hspace{-.2em}{\rm R} \else ${\rm I}\hspace{-.2em}{\rm R}$ \fi}
\newcommand{\T}{\ifmmode {\rm I}\hspace{-.2em}{\rm T} \else ${\rm I}\hspace{-.2em}{\rm T}$ \fi}
\newcommand{\N}{\ifmmode {\rm I}\hspace{-.2em}{\rm N} \else \mbox{${\rm I}\hspace{-.2em}{\rm N}$} \fi}
\newcommand{\B}{\ifmmode {\rm I}\hspace{-.2em}{\rm B} \else \mbox{${\rm I}\hspace{-.2em}{\rm B}$} \fi}
\newcommand{\Hil}{\ifmmode {\rm I}\hspace{-.2em}{\rm H} \else \mbox{${\rm I}\hspace{-.2em}{\rm H}$} \fi}
\newcommand{\C}{\ifmmode \hspace{.2em}\vbar\hspace{-.31em}{\rm C} \else \mbox{$\hspace{.2em}\vbar\hspace{-.31em}{\rm C}$} \fi}
\newcommand{\Cind}{\ifmmode \hspace{.2em}\vbarind\hspace{-.25em}{\rm C} \else \mbox{$\hspace{.2em}\vbarind\hspace{-.25em}{\rm C}$} \fi}
\newcommand{\Q}{\ifmmode \hspace{.2em}\vbar\hspace{-.31em}{\rm Q} \else \mbox{$\hspace{.2em}\vbar\hspace{-.31em}{\rm Q}$} \fi}
\newcommand{\Z}{\ifmmode {\rm Z}\hspace{-.28em}{\rm Z} \else ${\rm Z}\hspace{-.28em}{\rm Z}$ \fi}
\renewcommand{\vec}[1]{\bf{#1}}     
\newcommand{\CN}{\mathcal{CN}}
\begin{document}
	\ninept
	\maketitle
	\begin{abstract}
		This paper studies the transmit power optimization in a multi-cell massive multiple-input multiple-output (MIMO) system. To overcome the scalability issue of network-wide max-min fairness (NW-MMF), we propose a novel power control (PC) scheme. This scheme maximizes the geometric mean (GM) of the per-cell max-min spectral efficiency (SE). To solve this new optimization problem, we prove that it can be rewritten in a convex form and then solved using standard tools. To provide a fair comparison with the available utility functions in the literature, we solve the network-wide proportional fairness (NW-PF) PC as well. The NW-PF focuses on maximizing the sum SE, thereby ignoring fairness, but gives some extra attention to the weakest users. The simulation results highlight the benefits of our model which is balancing between NW-PF and NW-MMF.  
	\end{abstract}
	\begin{keywords}Massive MIMO, power control, max-min fairness, proportional fairness.
	\end{keywords}
	\section{Introduction}
	\label{sec:intro}
	Massive MIMO \cite{marzetta2010noncooperative} is a key technology in 5G \cite{Parkvall2017a,WWB5G}. It refers to a system in which the cellular base stations (BSs) are equipped with very many antennas. Massive MIMO supports spatial multiplexing of many users, beamforming, and spatial interference mitigation. It enhances the spectral and energy efficiency compared with conventional MIMO setups. Unlike conventional cellular systems, the power control (PC) in massive MIMO systems benefits from channel hardening, namely that the small-scale fading average out when having many antennas \cite{bjornson2017massive}. It means that in massive MIMO, one can optimize the data transmission power based on the  large-scale fading coefficients only, instead of optimizing with respect to the small-scale fading coefficients, which would require very rapid PC updates. PC schemes that maximize different utility functions have been considered in the literature \cite{bjornson2017massive,guo2014uplink,kammoun2014low,zhao2013energy,wu2016asymptotically,guo2016security,7122928,liu2017pilot1}. Max-min fairness (MMF) is one of the classical utility functions and it is studied for different setups in \cite{van2016joint,cheng2017optimal,van2017joint,yang2017massive,xiang2014massive}. It provides the same quality of service at all user locations, which is a highly desirable feature in future systems.

	Applying the MMF utility to a multi-cell massive MIMO network leads to a scalability issue. This is due to the fact that when increasing the number of cells and active users in the network, the probability of having a user with an extremely poor channel gets higher due to shadow fading. Therefore, network-wide MMF (NW-MMF) optimization leads to the situation in which all users in the network suffer from the weak channel of the worst user. Consequently, all users get low spectral efficiency (SE). Increasing the number of cells to infinity will eventually result in zero SE for all users in the network. This is a major problem that was pointed out in \cite{redbook,bjornson2017massive}, but seldom discussed in the academic literature where the simulation setups are often too small to observe overly small SEs.
Nevertheless, the NW-MMF schemes proposed in the literature are unsuitable for providing fairness in cellular networks.

	\subsection{Related works}
	
	A heuristic approach to resolve the scalability issue of NW-MMF was considered in \cite[Ch.~6]{redbook}. The idea is to maximize the minimum SE within each cell and then balance these values across cells. Hence, the weak users have a lower impact on the whole network performance and mostly affect their own cells. 
	The proposed algorithm in \cite{redbook} is computationally efficient, but relies on approximations and there is no guarantee of optimality.
	Inspired by this algorithm, we are proposing a new utility function that can be optimized rigorously: maximization of the geometric mean (GM) of the max-min SEs in each of the cells. The network-wide proportional fairness (NW-PF) utility was considered in \cite{bjornson2017massive} to balance between sum SE optimization and fairness. In simulations, it outperforms NW-MMF in terms of fairness for most users, but it gives no fairness guarantees except for giving non-zero SE to every user.

	\subsection{Contributions}
	\begin{itemize}
		\item We propose a novel PC scheme that solves the GM per-cell MMF problem. We then reformulate the problem to reach a convex formulation that can be solved to global optimality in an efficient way. 
The new scheme outperforms the heuristic scheme in \cite{redbook} in some cases and gives a comparable performance in other cases.
		\item To further investigate the benefits of the proposed PC scheme, we define and solve two more power control schemes for the multi-cell scenario at hand: NW-MMF and NW-PF. The numerical results show that the proposed PC scheme combines the benefits of NW-MMF and NW-PF without suffering from the scalability issue of NW-MMF.
	\end{itemize}
\begin{figure*}
	{{\begin{equation}\tag{4}\label{eq:ul_sinr}
			\mathrm{SINR}^{\rm ul}_{lk}=\frac{M \rho_{\rm ul} \gamma^{l}_{lk}\eta_{lk}}{1+\rho_{\rm ul} \sum\limits_{l'\in \mathcal{P}_l}^{} \sum\limits_{k'=1}^{K} \beta^{l}_{l'k'}\eta_{l'k'} + \rho_{\rm ul}\sum\limits_{l'\notin \mathcal{P}_l}^{}\sum\limits_{k'=1}^{K}\beta^{l}_{l'k'}\eta_{l'k'}+M\rho_{\rm ul}\sum\limits_{l'\in \mathcal{P}_l\backslash\{l\}}{}\gamma^{l}_{l'k}\eta_{l'k}}
			\end{equation}}
		\hrulefill
}\end{figure*}
\begin{figure*}
	{{\begin{equation}\tag{5}\label{eq:dl_sinr}
			\mathrm{SINR}^{\rm dl}_{lk}=\frac{M \rho_{\rm dl} \gamma^{l}_{lk}\eta_{lk}}{1+\rho_{\rm dl} \sum\limits_{l'\in \mathcal{P}_l}^{}\beta^{l'}_{lk} \left(\sum\limits_{k'=1}^{K} \eta_{l'k'}\right) + \rho_{\rm dl}\sum\limits_{l'\notin \mathcal{P}_l}^{}\beta^{l'}_{lk} \left(\sum\limits_{k'=1}^{K}\eta_{l'k'}\right)+M\rho_{\rm dl}\sum\limits_{l'\in \mathcal{P}_l\backslash\{l\}}^{}\gamma^{l'}_{lk}\eta_{l'k}}
			\end{equation}}
		\hrulefill
}\end{figure*}
	\section{System Model} 
	\label{SystemModel}
	In this paper, we consider a multi-cell massive MIMO setup that consists of $L$ cells, each associated with one BS. Each BS is equipped with $M$ antennas and is serving $K$ single-antenna users. In the proposed setup, ${\vec h}^{l}_{l',k}\sim \CN({\vec{0}},\beta^{l}_{l',k} {\vec I}_{M})$ is the channel response between BS $l$ and user $k$ in cell $l'$, where $\beta^{l}_{l',k} \geq 0$ is the corresponding large-scale fading coefficient.

	We use the conventional block fading to model the randomness of the channels over time and frequency. \textcolor{black}{The coherence block of a channel is defined as the time-frequency block in which the channel response is frequency-flat and static in time}. The channels change independently from one block to another according to a stationary ergodic random process. The number of samples per coherence block is given by $\tau_c = T_{c} B_{c}$, where $T_{c}$ is the coherence time and $B_{c}$ is the coherence bandwidth \cite[Ch.~2]{redbook},\cite[Ch.~2]{bjornson2017massive}.

	Therefore, it is assumed that channel estimation is carried out at each BS once per coherence block. Each user transmits a pilot sequence from a predefined set of orthogonal pilots.  $\tau_p$ samples (with $\tau_p\leq \tau_c$) are dedicated for pilot transmission and the remaining samples will be utilized for uplink (UL) and downlink (DL) data transmission. The channel estimation phase follows the standard minimum mean square error (MMSE) estimation approach in the literature, e.g., \cite{redbook,bjornson2017massive,kay1993fundamentals} and the derivation is omitted here. The MMSE estimate of ${\vec h}^{l}_{l',k}$ is denoted as ${\hat{\vec h}}^{l}_{l',k}\sim \CN({\vec{0}},\gamma^{l}_{l',k} {\vec I}_{M})$, where $\gamma^{l}_{l',k}$ is the corresponding variance:
	\begin{equation}
	\begin{aligned}
	\gamma^{l}_{l',k}=  \frac{{\tau_p \rho_{\rm ul}\left({\beta^{l}_{l',k}}\right)^2}}{{1+ \tau_p \rho_{\rm ul}\sum\limits_{l'' \in \mathcal{P}_l}\beta^{l}_{l'',k}}},\quad l' \in \mathcal{P}_l,
	\end{aligned}
	\end{equation}
	where $\mathcal{P}_l$ is set of the BSs that are using the same $K$ pilot sequences as BS $l$. If two BSs are sharing pilots, user $k$ in the respective cells use identical pilots for $k=1,\ldots,K$.
	
	We assume that each BS performs maximum ratio processing during the data transmission phase. The detailed derivation of the SEs of both UL and DL data transmission for multi-cell massive MIMO setup is provided in  \cite{redbook,bjornson2017massive} and is therefore omitted here. The ergodic SE of user $k$ in cell $l$ is given by \cite[Th.~4.4]{bjornson2017massive}
	\begin{equation}
	\mathrm{SE}^{\rm ul}_{lk} = \left(1- \frac{\tau_p}{\tau_c}\right)\log_2\left(1+ \mathrm{SINR}^{\rm ul}_{lk}\right),
	\end{equation}

	where $\mathrm{SINR}^{\rm ul}_{lk}$, given in \eqref{eq:ul_sinr} at the top of the page,  is the effective UL SINR of user $k$ in cell $l$.  For DL data transmission, the ergodic SE of user $k$ in cell $l$ is 
	\begin{equation}
	\mathrm{SE}^{\rm dl}_{lk} = \left(1- \frac{\tau_p}{\tau_c}\right)\log_2\left(1+ \mathrm{SINR}^{\rm dl}_{lk}\right),
	\end{equation}
	where the effective DL SINR for the case of maximum ratio processing at the BSs is provided in \eqref{eq:dl_sinr} at the top of the page \cite[Ch.~4]{redbook}, $\rho_{\rm ul}$ and $\rho_{\rm dl}$ are UL and DL normalized transmit powers, respectively. In both UL and DL, $\eta_{lk} \in [0,1]$ is the PC coefficient of user $k$ in cell $l$ and these will be optimization variables in this paper.

	\section{Problem Formulation} 
	\label{probelms}
	This section motivates and defines the problem formulation. Specifically, we will propose a new multi-cell MMF PC scheme.

	In order to evaluate and compare our proposed scheme with the state-of-the-art, we also define and solve two more optimization problems. The first one is NW-MMF. Notice that MMF is the ideal utility function in a network where everyone has the same demand for data. It provides equal performance among all the users by prioritizing the user with the weakest channels. However, this scheme is not scalable and by increasing the number of cells in the network, we may end up with zero SE for all users---uniform but bad performance for everyone. It happens because the probability of having a user in deep fade due to shadow fading increases and this penalizes the whole network. More precisely, when $\beta_{l,k}^{l} \to 0$ for one user, the SE goes to zero for all the users.
	The MMF problem for UL data transmission is defined as \cite[Ch.~7]{bjornson2017massive}
	\setcounter{equation}{5}
	\begin{equation}\label{opt_problem2}
	\begin{aligned}
	& \underset{\{\eta_{lk}\}}{\text{maximize}}\quad \underset{l,k}{\text{min}}
	& &  \mathrm{SINR}_{lk}^{\rm{ul}}  \\
	& \text{subject to}
	& &  0 \leq \eta_{lk} \leq 1, \forall~l,k. \\
	\end{aligned}
	\end{equation}
	The power constraints reflect that each user has its own power amplifier and can transmit at any power $\rho_{\rm ul}\eta_{lk}$ from $0$ to $\rho_{\rm ul}$.
	Note that the optimization problem for the DL is similar to \eqref{opt_problem2} but using $\mathrm{SINR}^{\rm{dl}}_{lk}$ and different power constraints: $\sum_{k=1}^{K} \eta_{lk} \leq 1,~\forall l$ since every BS can allocate its maximum power $\rho_{\rm dl}$ freely between its users so that user $k$ in cell $l$ is allocated $\rho_{\rm dl} \eta_{lk}$.
	\subsection{Geometric-mean per-cell max-min fairness}
	To solve the scalability issue of NW-MMF, we formulate a new optimization problem in which the optimization objective is the GM of per-cell MMF of SINRs of the cells. The optimization problem for the UL data transmission is
	\vspace{-.2cm} 
	\begin{equation}\label{opt_problem1}
	\begin{aligned}
	& \underset{\{t_l\},\{\eta_{lk}\}}{\text{maximize}}
	& &  \prod_{l=1}^L \log_2\left(1+\epsilon+t_l\right)   \\
	& \text{subject to}
	& &  0 \leq \eta_{lk} \leq 1, \forall~l,k, \\
	& & & \mathrm{SINR}^{\rm ul}_{lk} \geq t_l, \forall~l,k,\\
	\end{aligned}
	\end{equation} 
	where $t_l$ is the minimum SINR of cell $l$ and $\epsilon> 0$ is a small control parameter that prevents the utility from being identically zero when one cell has a user with a very poor channel (note that zero SINR can happen in cell $l$ when $\min_{k}({\beta_{l,k}^{l}}) = 0$). We also define $1_{\epsilon}= 1+\epsilon$ to simplify the notation. The first constraint in \eqref{opt_problem1} deal with the PC coefficients for the UL data transmission of users in each cell, and the second constraint is to perform MMF on the SINR of each cell. This constraint guarantees to give the same SINR to every user within a cell, but the SINR value can be different from other cells. Therefore, a cell where all users have poor channels will not prevent the users in other cells from achieving higher SINRs. The GM utility of the per-cell SEs provides proportional fairness between cells.
	
	This optimization problem is the DL counterpart to \eqref{opt_problem1}:  
	
	\begin{equation}\label{opt_problem1DL}
	\begin{aligned}
	& \underset{\{t_l\},\{\eta_{lk}\}}{\text{maximize}}
	& &  \prod_{l=1}^L \log_2\left(1_{\epsilon}+t_l\right)   \\
	& \text{subject to}
	& &   \eta_{lk} \geq 0,~\forall~l,k, \\
	& & &  \sum_{k = 1}^{K}\eta_{lk} \leq 1,~\forall~l, \\
	& & & \mathrm{SINR}^{\rm{dl}}_{lk} \geq t_l,~\forall~l,k.\\
	\end{aligned}
	\end{equation}
	The differences from the UL are the SINR expressions being used and the power constraints, which are now reflecting the fact that each BS can distribute its power arbitrarily between its users.

	\subsection{Network-wide Proportional Fairness}
	Next, we consider the alternative network utility function with the product of the SINRs. Maximizing this objective provides NW-PF with respect to the SINRs of the users in the network. It is shown in  \cite[Sec.~7.1]{bjornson2017massive} that this objective is a lower bound on the sum SE of the network, but with greater emphasis on fairness since the utility is zero if any user gets zero SE. We can write the optimization problem for UL data transmission as 
	\begin{equation}\label{opt_problem3}
	\begin{aligned}
	& \underset{\{\textcolor{black}{t_{lk}}\},\{\eta_{lk}\}}{\text{maximize}}
	& & \prod_{l=1}^L \prod_{k=1}^K t_{lk}   \\
	& \text{subject to}
	& &  0 \leq \eta_{lk} \leq 1, \forall \textcolor{black}{l,k,} \\
	& & & \mathrm{SINR}^{\rm ul}_{lk} \geq t_{lk}, \forall l,k,\\
	\end{aligned}
	\end{equation}
	where $t_{lk}$ indicates the effective SINR of user $k$ located at cell $l$. 
The corresponding DL optimization problem is formulated as 
	\begin{equation}\label{opt_problem3DL}
	\begin{aligned}
	& \underset{\{\textcolor{black}{t_{lk}}\},\{\eta_{lk}\}}{\text{maximize}}
	& & \prod_{l=1}^L \prod_{k=1}^K t_{lk}   \\
	& \text{subject to}
	& &   \eta_{lk} \geq 0, \forall l,k \\
	& & & \sum_{k=1}^{K}\eta_{lk} \leq 1,  \forall l \\
	& & & \mathrm{SINR}^{\rm dl}_{lk} \geq t_{lk}, \forall  l,k.\\
	\end{aligned}
	\end{equation}
	The difference between this optimization problem and our new proposed formulation is that this optimization problem deals with each user individually, so there will be large SE differences within a cell.
	
	\section{Solutions to the Proposed Problems}
	\label{solution}
	In this section, we provide solutions to the optimization problems introduced in Section \ref{probelms}. 
	First, we solve the proposed GM per-cell MMF PC for the UL data transmission given in \eqref{opt_problem1}.  We can rewrite the optimization problem as 
	\vspace{-.2cm}
	\begin{equation}\label{opt_problem_re}
	\begin{aligned}
	& \underset{\{t_l\},\{\eta_{lk}\}}{\text{maximize}}
	& &  \sum_{l=1}^L \log \left( \log_2\left(1_{\epsilon}+t_l\right)\right)   \\
	& \text{subject to}
	& &  0 \leq \eta_{lk} \leq 1, \forall~l,k, \\
	& & & \mathrm{SINR}^{\rm ul}_{lk} \geq t_l, \forall~l,k,\\
	\end{aligned}
	\end{equation}
	since the natural logarithm is a monotonically increasing function. The problems are identical in terms of having the same optimal solution. Thus, the product of the SE of the cells is written as the sum of the logarithms of these SEs. By the change of variables 
	\vspace{-.2cm} 
	\begin{equation}
	\begin{aligned}
	t_l &= e^{\bar{t}_l},\quad\quad\eta_{lk} & = e^{\bar{\eta}_{lk}},
	\end{aligned}
	\end{equation}
	we obtain the equivalent reformulated problem provided in \eqref{opt_problem_re2}. 
	\begin{figure*}
		{{\begin{equation}\label{opt_problem_re2}
				\begin{aligned}
				& \underset{\{\bar{t}_l\},\{\bar{\eta}_{lk}\}}{\text{maximize}}
				& &  \sum_{l=1}^L \log\left(\log_2\left(1_{\epsilon}+e^{\bar{t}_l}\right)\right)   \\
				& \text{subject to}
				& & e^{\bar{\eta}_{lk}} \leq 1, \forall~l,k \\
				& & & \frac{M \rho_{\rm ul} \gamma^{l}_{lk} e^{\bar{\eta}_{lk}}}{1+\rho_{\rm ul} \sum\limits_{l'\in \mathcal{P}_l}^{} \textcolor{black}{\sum\limits_{k' = 1}^{K}}\beta^{l}_{l'k'}  e^{\bar{\eta}_{l'k'}} + \rho_{\rm ul}\sum\limits_{l'\notin \mathcal{P}_l}^{} \textcolor{black}{\sum\limits_{k' = 1}^{K}}\beta^{l}_{l'k'} e^{\bar{\eta}_{l'k'}}+M\rho_{\rm ul}\sum\limits_{l'\in \mathcal{P}_l\backslash\{l\}}^{}\gamma^{l}_{l'k}e^{\bar{\eta}_{l'k}}} \geq e^{\bar{t}_l}, \forall~l,k.\\
				\end{aligned}
				\end{equation}
			}
			\hrulefill
			\vspace{-4mm}
	}\end{figure*}
	We observe that the constraints can be rewritten as
	\vspace{-.2cm}
	\begin{equation}
	\begin{aligned}
	&e^{\bar{t}_l-\bar{\eta}_{lk}}+\rho_{\rm ul} \sum\limits_{l'=1}^{L}\sum\limits_{k'=1}^{K}\beta^{l}_{l'k'} e^{\bar{\eta}_{l'k'}+\bar{t}_l-\bar{\eta}_{lk}}\\
	&+M\rho_{\rm ul}\sum\limits_{l'\in \mathcal{P}_l\backslash\{l\}}^{}\gamma^{l}_{l'k}e^{\bar{\eta}_{l'k}+\bar{t}_l-\bar{\eta}_{lk}}\leq M \rho_{\rm ul} \gamma^{l}_{lk}.
	\end{aligned}
	\end{equation}
	After taking the logarithm of both sides, we have a log-sum-exponential function, which is a convex function less than or equal to a constant. This is a convex constraint. Therefore the only concern is whether the objective function in \eqref{opt_problem_re2} is concave or not. We provide the following theorem that shows that the objective function is a concave function and thus \eqref{opt_problem_re2} is a convex problem. Note that solving the optimization problem for the DL case follows the same steps as UL, hence it is omitted to avoid repetition.
	\begin{theorem}
		The function  $f(x) = \log \left(\log \left( 1_{\epsilon}+ e^{x}\right)\right)$ is a concave function with respect to $x$ for $x\geq0$ \textcolor{black}{and for any $\epsilon > 0$}.
	\end{theorem}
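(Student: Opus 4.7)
The plan is to show $f''(x) \le 0$ by direct differentiation and then reduce concavity to a one-dimensional inequality. Set $u(x) = 1 + \epsilon + e^{x}$ so that $f(x) = \log \log u(x)$. A chain rule computation gives $f'(x) = e^{x} / \bigl(u(x) \log u(x)\bigr)$, and one further application of the quotient rule (using $u'(x) = u''(x) = e^{x}$) produces, after routine simplification,
\[
f''(x) = \frac{e^{x}\,\bigl[(1+\epsilon)\log u(x) - e^{x}\bigr]}{u(x)^{2}\,\bigl(\log u(x)\bigr)^{2}}.
\]
For $x \ge 0$ we have $u(x) \ge 2 + \epsilon > 1$, so the denominator is strictly positive and the sign of $f''(x)$ is governed by the bracketed factor alone. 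Concavity therefore reduces to the scalar inequality
\[
(1 + \epsilon)\log(1 + \epsilon + e^{x}) \le e^{x} \qquad \text{for all } x \ge 0.
\]

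To handle this one-variable inequality I would introduce the auxiliary function $\phi(x) = e^{x} - (1 + \epsilon)\log(1 + \epsilon + e^{x})$ and show $\phi(x) \ge 0$ on $[0,\infty)$. A short differentiation gives
\[
\phi'(x) = e^{x} - \frac{(1 + \epsilon)\,e^{x}}{1 + \epsilon + e^{x}} = \frac{e^{2x}}{1 + \epsilon + e^{x}} \ge 0,
\]
so $\phi$ is non-decreasing on $[0, \infty)$. Hence $\phi(x) \ge \phi(0)$, and the whole concavity statement collapses to the single endpoint inequality $\phi(0) = 1 - (1 + \epsilon)\log(2 + \epsilon) \ge 0$.

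The main obstacle is exactly this endpoint check. Every step up to that point is mechanical (chain rule, quotient rule, a one-line monotonicity computation), so all of the content of the theorem is concentrated in the scalar bound $(1 + \epsilon)\log(2 + \epsilon) \le 1$, which is in turn what pins down the admissible range of the regularization parameter $\epsilon$ used in \eqref{opt_problem1}. I would obtain this either by a direct quantitative bound using the concavity of $\log$ (\emph{e.g.}, $\log(2+\epsilon) \le \log 2 + \epsilon/2$) combined with a small-$\epsilon$ expansion, or, in a more robust variant, by introducing a sharper bound on the argument $e^{x}$ valid in the regime relevant to the power control problem. Once this base case is secured, the monotonicity of $\phi$ propagates the inequality to all $x \ge 0$, giving $f''(x) \le 0$ throughout the stated domain and completing the proof of concavity.
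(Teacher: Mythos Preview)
Your approach is exactly the paper's: compute $f''$ directly, factor out a positive piece, and reduce the sign to a monotone auxiliary function whose value at $x=0$ settles everything. Your second derivative
\[
f''(x)=\frac{e^{x}\bigl[(1+\epsilon)\log u(x)-e^{x}\bigr]}{u(x)^{2}\bigl(\log u(x)\bigr)^{2}}
\]
is correct. The paper's version of the same calculation drops the coefficient $1_{\epsilon}$ when differentiating $e^{x}/(1_{\epsilon}+e^{x})$ (it writes $e^{x}+(e^{x})^{2}-(e^{x})^{2}$ where $1_{\epsilon}e^{x}+(e^{x})^{2}-(e^{x})^{2}$ should appear); this is why the paper's auxiliary function is $g(x)=e^{x}-\log(1+e^{x})$ rather than your $\phi(x)=e^{x}-(1+\epsilon)\log(1+\epsilon+e^{x})$, and why the paper can assert $g(0)>0$ with no restriction on $\epsilon$.

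Your hesitation at the endpoint is therefore well founded, and you should not try to argue it away: the inequality $\phi(0)=1-(1+\epsilon)\log(2+\epsilon)\ge 0$ is \emph{not} true for every $\epsilon>0$. It holds precisely when $(1+\epsilon)\log(2+\epsilon)\le 1$, i.e.\ for $\epsilon$ below roughly $0.24$, and fails beyond that; for instance at $\epsilon=1$ one gets $f''(0)=\bigl(2\log 3-1\bigr)/\bigl(9(\log 3)^{2}\bigr)>0$, so $f$ is strictly convex near $0$. The theorem as stated (``for any $\epsilon>0$'') is thus incorrect; concavity on $[0,\infty)$ holds only for sufficiently small $\epsilon$. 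Since in the paper $\epsilon$ is introduced as a small regularizer, this does not affect the intended application, but your instinct that the base case pins down the admissible range of $\epsilon$ is exactly right, and no sharper bounding will remove that constraint.
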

	\begin{proof}
		The first derivative of $f(x)$ is 
		\begin{equation}
		f'(x) = \frac{1}{\log(1_{\epsilon}+e^x)}\frac{1}{1_{\epsilon}+e^x}e^x = \frac{e^x}{1_{\epsilon}+e^x}\dfrac{1}{\log(1_{\epsilon}+e^x)}
		\end{equation}
		and the second derivative can be written as 
		\begin{equation}
		\begin{aligned}
		f''(x) &= \left(\frac{e^x}{1_{\epsilon}+e^x}\right)' \frac{1}{\log(1_{\epsilon}+e^x)} + \left(\frac{1}{\log(1_{\epsilon}+e^x)}\right)' \frac{e^x}{1_{\epsilon}+e^x}\\
		&= \frac{\left(1_{\epsilon}+e^x\right)\left(e^x\right)' - e^x\left(1_{\epsilon}+e^x\right)'}{\left(1_{\epsilon}+e^x\right)^2}\frac{1}{\log(1_{\epsilon}+e^x)} \\
		&+ \frac{-\frac{e^x}{1_{\epsilon}+e^x}}{\left(\log(1_{\epsilon}+e^x)\right)^2}\frac{e^x}{1_{\epsilon}+e^x}\\
		&= \frac{e^x + (e^x)^2 - (e^x)^2}{\left(1_{\epsilon}+e^x\right)^2 \log(1_{\epsilon}+e^x)} - \frac{(e^x)^2}{\left(1_{\epsilon}+e^x\right)^2 (\log(1_{\epsilon}+e^x))^2}\\
		&= \left(1-\frac{e^x}{\log(1_{\epsilon}+e^x)}\right)\frac{e^x}{(1_{\epsilon}+e^x)^2(\log(1_{\epsilon}+e^x))^2}.
		\end{aligned}
		\end{equation}
		Now we define $g(x)= e^x -\log(1+e^x)$ and we have 
		\begin{equation}
		\begin{aligned}
		g'(x) &= e^x - \frac{e^x}{1_{\epsilon}+e^x} = e^x \left(1-\frac{1}{1_{\epsilon}+e^x}\right) \geq 0.
		\end{aligned}
		\end{equation}
		This means that $g(x)$ is monotonically increasing in $x$, we also have $g(0)>0$. Therefore we have shown that $g(x)>0, \forall x\geq 0$. 
		This implies $f''(x)\leq 0, \forall x\geq 0$. Therefore, we have proved that $f(x)=\log(\log(1_{\epsilon}+e^x))$ is a concave function in $x$. 
	\end{proof}
	As the objective is a sum of concave functions, it is also jointly concave. Hence, we have shown that \eqref{opt_problem1} is a convex problem and it follows that every stationary point is also a global optimum solution. Therefore any algorithm that converges to a stationary point can be applied to solve the problem. In the simulation part, we use the interior point algorithm within the \texttt{fmincon} solver in MATLAB.

	To solve the NW-MMF optimization problems given in \eqref{opt_problem2} and \eqref{opt_problem3} one can write it on epigraph form and solve linear feasibility optimization problems using the bisection algorithm. The details of the bisection algorithm can be found in \cite[Ch.~7]{bjornson2017massive}.

	The NW-PF problem for both UL and DL data transmission, given in \eqref{opt_problem3} and \eqref{opt_problem3DL}, are geometric problems \cite{boyd2007tutorial,gpPowercont}. The detailed proof is provided in \cite[Th.~7.2]{bjornson2017massive}. These optimization problems can be solved efficiently by using standard convex optimization solvers, for example, we used CVX \cite{grant2008cvx} in the simulation part.
	\begin{figure}[htb!] \vspace{-2mm}
		\begin{minipage}[t]{0.43\linewidth}
			\centering
			\centerline{\includegraphics[width=5cm]{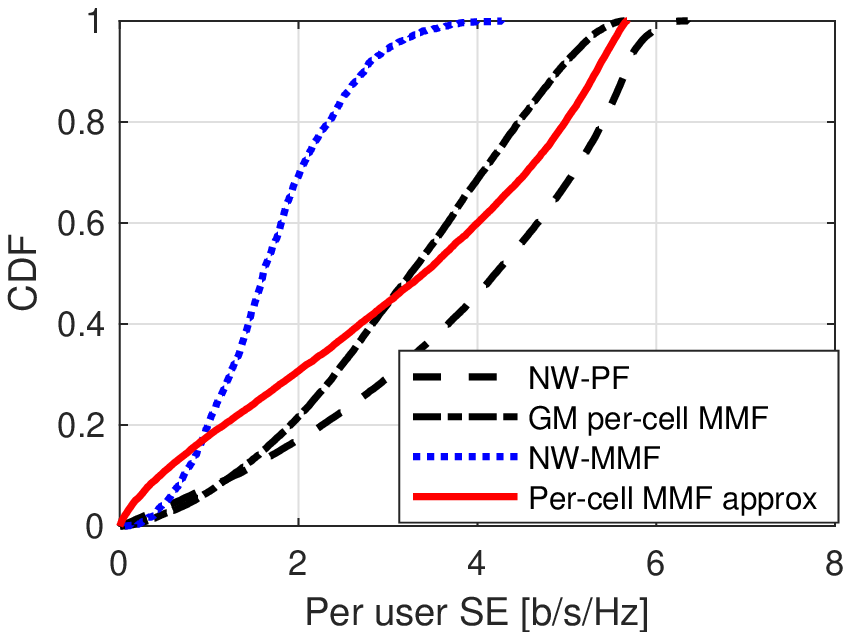}}
			\centerline{(a) SE of CU  $k$}\medskip
		\end{minipage}
		\hfill
		\begin{minipage}[t]{0.43\linewidth}
			\centering
			\centerline{\includegraphics[width=5cm]{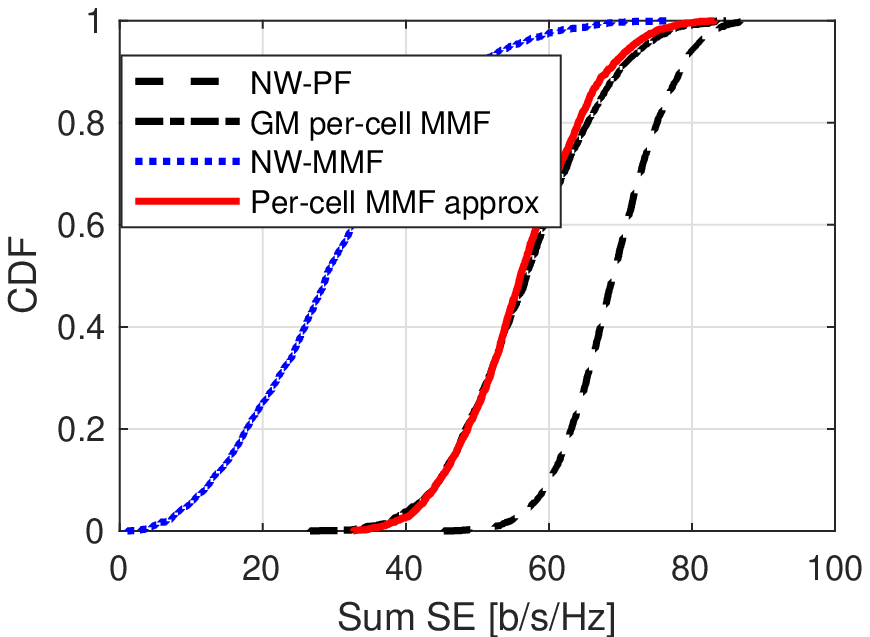}}
			\centerline{(b) Sum SE}\medskip
		\end{minipage} \vspace{-2mm}
		\caption{Uplink data transmission.}
		\label{fig:ul} \vspace{-2mm}
	\end{figure}
	\section{Numerical Analysis}
	In this section, we provide a numerical comparison of the three PC algorithms provided in Section \ref{probelms}. In addition, we consider the heuristic per-cell max-min algorithm proposed in \cite[Ch.~6]{redbook}. We consider a multi-cell massive MIMO setup consisting of $9$ cells and we use wrap-around to avoid edge effects. \textcolor{black}{Each BS is equipped with $M = 100$ antennas.} We assume a square grid layout where each square has a BS in the center and all of the BSs are located in a $1$\,km$^2$ area. Furthermore, each BS serves two users that are randomly distributed with uniform distribution in the coverage area of the BS. We assume a reuse factor of one, which means all the BSs are in the set $\mathcal{P}_l$. The bandwidth is $20\,$MHz and each coherence block contains $200$ symbols. The large-scale fading coefficients are modeled as \cite{bjornson2017massive}
	\begin{equation}
	\beta^{l}_{l',k} \left[{\rm dB}\right] = -35 - 36.7\log_{10}\left(d^{l}_{l',k}/1\,\rm{m}\right) +F^{l}_{l',k},
	\end{equation}
	where $d^{l}_{l',k} $ is distance between user $k$ located in cell $l'$ to BS $l$. In addition, $F^{l}_{l',k}$ is
	shadow fading generated from a log-normal
	distribution with standard deviation 8 ${\rm dB}\,$. The noise variance is set to~$-94\,$dBm and the maximum transmit power of the users is~$200\,$mW for UL data transmission. The maximum transmit power of the BS is selected to be $40\,$W. The simulations consider $2000$ realizations, where the users are dropped randomly in each cell. Figs.\ref{fig:ul}(a) and  \ref{fig:dl}(a) plot the cumulative distribution function (CDF) of the SE of all the users for UL and DL data transmission, respectively.
	
	In these figures, it can be seen that NW-PF gives higher SE than the proposed GM of per-cell MMF for most users but not in the lower tail which is the important part for delivering fairness and uniform performance. 
The 	$10\%$ weakest users in the UL and the $12\%$ weakest users in the DL get higher SE when using the proposed GM per-cell MMF.
The proposed scheme is comparable to NW-MMF in the lower tail (i.e., the weakest cell in the network) but substantially better for all other cells.

	\begin{figure}[htb] \vspace{-2mm}
		\begin{minipage}[t]{1\linewidth}
			\centering
			\centerline{\includegraphics[width=5.5cm]{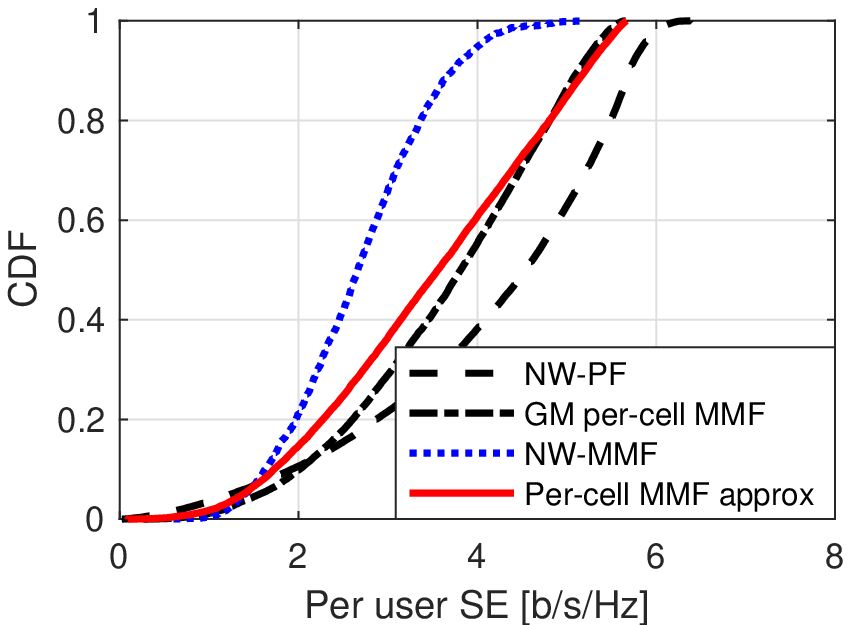}}
			\centerline{(a) SE of CU  $k$}\medskip
		\end{minipage}
		\begin{minipage}[t]{1\linewidth}
			\centering
			\centerline{\includegraphics[width=5.5cm]{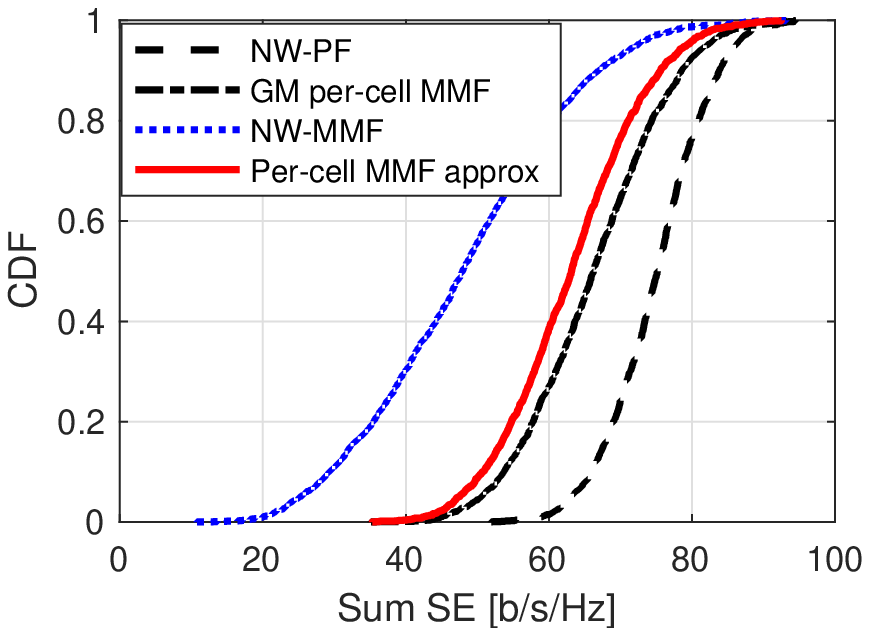}}
			\centerline{(b) Sum SE}\medskip
		\end{minipage}
		\vspace{-4mm}
		\caption{Downlink data transmission.} 		\vspace{-2mm}
		\label{fig:dl}	
	\end{figure}
	If we do a one-to-one comparison for all the users and calculate the percentage of users that get better SE with NW-MMF than with NW-PF or the proposed scheme, we get the results provided in {Table}~\ref{table:1}. These numbers show that one fifth of the users get higher SE, but we also see from the CDF curves that their SE gains are tiny, thus percentage values like this need to be taken with a grain of salt.
	\begin{table}[b!]
		\centering \vspace{-4mm}
		\caption{Percentage of users getting better SE using NW-MMF.} \vspace{+1mm}
		\begin{center}
			\begin{tabular}{| p{1.1cm}| p{1.5cm}| p{2.5cm}|}
				\hline
				&NW-PF& GM per-cell MMF \\ 
				\hline
				Uplink & 14\% & 17\% \\
				\hline
				Downlink & 18\% & 22\% \\
				\hline
			\end{tabular}\label{table:1}\vspace{-0.5cm}
		\end{center} 
	\end{table} 

	Another observation is that, in both figures, the heuristic scheme from \cite[Ch.~6]{redbook} provides similar performance as our proposed method in the DL, but the proposed method generally gives higher SE. In Fig.~\ref{fig:ul}(b) and  \ref{fig:dl}(b), we plot the CDF of the sum SE of the whole network. We see that the NW-PF scheme performs the best in terms of sum SE as it can be seen as an approximation to the sum SE maximization in the high SINR regime. 
	\vspace{-4mm}
	
	\section{Conclusion}
	In this paper, we analyzed different power control schemes that target fairness in multi-cell massive MIMO systems. 
	We proposed to maximize the geometric mean of the per-cell max-min SEs. This approach is not subject to the same scalability issues as the conventional NW-MMF approach, which has received much attention in the literature. We solved the new problem formulation to global optimality and achieved better or comparable performance as the previous heuristic scheme in \cite{redbook} that also targeted to resolve the scalability issue of NW-MMF. Furthermore, our proposed approach provides more fairness towards weak users in comparison with NW-PF.

	\bibliographystyle{IEEEtran}
	\bibliography{di}
	
\end{document}